\theoremstyle{plain}
\newtheorem{thm}{\protect\theoremname}
  \theoremstyle{definition}
  \newtheorem{defn}{\protect\definitionname}
  \theoremstyle{remark}
  \newtheorem{rem}{\protect\remarkname}
  \theoremstyle{plain}
  \newtheorem{lem}{\protect\lemmaname}
  \theoremstyle{plain}
  \newtheorem{cor}{\protect\corollaryname}
  \theoremstyle{plain}
  \newtheorem{fact}{\protect\factname}
\date{}
  \providecommand{\definitionname}{Definition}
  \providecommand{\factname}{Fact}
  \providecommand{\lemmaname}{Lemma}
  \providecommand{\remarkname}{Remark}
\providecommand{\corollaryname}{Corollary}
\providecommand{\theoremname}{Theorem}
\begin{document}

\title{Counting perfect matchings in graphs that exclude a single-crossing
minor}

\author{Radu Curticapean\thanks{Saarland University, Dept.of Computer Science, \tt{curticapean@cs.uni-sb.de}}}

\maketitle
\global\long\def\PerfMatch{\mathrm{PerfMatch}}
\global\long\def\sharpP{\mbox{\ensuremath{\mathsf{\#P}}}}
\global\long\def\Sig{\mathrm{Sig}}
\global\long\def\bigoh{\mathcal{O}}
\global\long\def\PM{\mathcal{PM}}
\global\long\def\Match{\mathcal{M}}
\global\long\def\T{\mathcal{T}}

\begin{abstract}
A graph $H$ is \emph{single-crossing} if it can be drawn in the plane
with at most one crossing. For any single-crossing graph $H$, we
give an $\mathcal{O}(n^{4})$ time algorithm for counting perfect
matchings in graphs excluding $H$ as a minor. The runtime can be
lowered to $\mathcal{O}(n^{1.5})$ when $G$ excludes $K_{5}$ or
$K_{3,3}$ as a minor.

This is the first generalization of an algorithm for counting perfect
matchings in $K_{3,3}$-free graphs (Little 1974, Vazirani 1989).
Our algorithm uses black-boxes for counting perfect matchings in planar
graphs and for computing certain graph decompositions. Together with
an independent recent result (Straub et al. 2014) for graphs excluding
$K_{5}$, it is one of the first nontrivial algorithms to not inherently
rely on Pfaffian orientations.
\end{abstract}

\section{Introduction}

A \emph{perfect matching} of a graph $G=(V,E)$ is a set $M\subseteq E$
of $|V|/2$ vertex-disjoint edges. For an edge-weighted graph $G$
with weights $w:E\to\mathbb{Q}$, we consider the problem of computing
$\PerfMatch(G)=\sum_{M}\prod_{e\in M}w(e)$, where the outer sum ranges
over all perfect matchings $M$ of $G$. If $w(e)=1$ for all $e\in E(G)$,
this quantity plainly counts perfect matchings of $G$.

The problem $\PerfMatch$ arises in statistical physics as the dimer
problem \cite{Kasteleyn19611209,FisherTemperley}. In algebra and
combinatorics, the quantity $\PerfMatch(G)$ for bipartite $G$ is
better known as the permanent of the (bi-)adjacency matrix of $G$.
The complexity of its evaluation is of central interest in counting
complexity \cite{DBLP:journals/tcs/Valiant79} and algebraic complexity
\cite{B-2000-Completeness-and-Reduction-in-Algebraic-Complexity-Theory}.
In fact, the permanent was the first natural problem with a polynomial-time
decision version that was shown $\sharpP$-hard, even for zero-one
weights, thus demonstrating that counting can be harder than decision.

To cope with this hardness, several reliefs were proposed: If counting
may be relaxed to approximate counting, then the problem becomes feasible:
It was shown in \cite{DBLP:journals/jacm/JerrumSV04} that $\PerfMatch(G)$
admits a fully polynomial randomized approximation scheme on graphs
$G$ with non-negative edge weights. If the exact value of $\PerfMatch(G)$
is required, but $G$ may be restricted to a specific class of graphs,
then a rather short list of polynomial-time algorithms is known:

For planar $G$, the value $\PerfMatch(G)$ can be computed in time
$\mathcal{O}(n^{1.5})$ by \cite{FisherTemperley,Kasteleyn19611209}.
Interestingly, this algorithm from 1967 predates the hardness result
for general graphs. Note that planar graphs exclude both $K_{3,3}$
and $K_{5}$ as a minor. In \cite{PM_Little,DBLP:journals/iandc/Vazirani89},
the previous algorithm was generalized to a (parallel) algorithm on
graphs $G$ that are only required to exclude the minor $K_{3,3}$.
Orthogonally to this, it was shown in \cite{DBLP:journals/combinatorics/GalluccioL99}
that $\PerfMatch(G)$ admits an $\mathcal{O}(4^{g}n^{3})$ algorithm
on graphs that can be embedded on a surface of genus $g$. Recently,
and independently of this work, a (parallel) polynomial-time algorithm
was shown in \cite{ThieraufPM} for computing $\PerfMatch(G)$ on
graphs excluding $K_{5}$ as a minor. In the present paper, we show:
\begin{thm}
\label{thm:main}Let $H$ be a single-crossing graph, i.e., $H$ can
be drawn in the plane with at most one crossing. Then there is an
$\mathcal{O}(n^{4})$ time algorithm for computing $\PerfMatch(G)$
on input graphs $G$ that exclude $H$ as a minor. If $H$ is one
of the single-crossing graphs $K_{5}$ or $K_{3,3}$, then the runtime
can be lowered to $\mathcal{O}(n^{1.5})$.
\end{thm}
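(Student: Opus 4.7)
The plan is to combine a structural decomposition of $H$-minor-free graphs with a dynamic program over this decomposition, using the FKT algorithm for planar $\PerfMatch$ as a black box. I would invoke a Robertson--Seymour-style theorem for single-crossing minor-closed families, asserting that every $H$-minor-free graph $G$ admits a tree decomposition whose parts are either planar or of treewidth bounded by a constant depending only on $H$, with parts glued along separators of size at most three (3-clique-sums). For $H\in\{K_5,K_{3,3}\}$, Wagner's classical theorems give the stronger statement that the non-planar parts are just $K_5$ or the Wagner graph $V_8$, hence of constant size. Polynomial-time construction of this decomposition is a known black box.

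For each part $P$ with boundary $S$ of size at most three, I would maintain the \emph{matching signature} $\Sig_P$: for every $T\subseteq S$, the entry $\Sig_P(T)\in\mathbb{Q}$ records the weighted count of matchings of $P$ that cover exactly $V(P)\setminus(S\setminus T)$, i.e.\ that match internally precisely the boundary vertices listed in $T$. Each signature has at most $2^{3}=8$ entries. The gluing rule at a 3-sum expresses the signature of the union as a constant-size convolution over how each separator vertex is matched: inside $P_1$, inside $P_2$, or along an edge of the shared triangle on $S$. At the leaves one computes $\Sig_P$ by: $O(1)$ calls to FKT on $P$ with appropriate subsets of $S$ removed when $P$ is planar (deleting at most three vertices preserves planarity), and a standard linear-time treewidth DP when $P$ is small. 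Running this bottom-up on an $O(n)$-node decomposition tree, the total work is dominated by the planar calls and is comfortably $O(n^{4})$.

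To obtain the $O(n^{1.5})$ bound when $H\in\{K_5,K_{3,3}\}$, I would exploit the fact that the non-planar parts are now of constant size. I would construct, once and for all, small planar matching gadgets that realize the same signature on a three-vertex boundary as $K_5$ (resp.\ $V_8$). Substituting these gadgets in place of the non-planar parts reduces the whole instance to a single planar graph $G'$ with $\PerfMatch(G')=\PerfMatch(G)$, which is then handled by one FKT invocation.

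The main obstacle is the combining rule at the 3-sum. A 3-clique-sum glues two parts along a triangle, but the three triangle edges may be present in one side, both sides, or neither, and depending on the variant of the operation the clique itself may be retained or deleted. Ensuring that every perfect matching of $G$ is counted exactly once---without double-counting those using clique edges shared by both sides---forces a careful case analysis and may require a refined signature that also records which clique edges on $S$ the matching uses. The companion task of building constant-size planar gadgets equivalent to $K_5$ and $V_8$ is elementary but has to be carried out explicitly to make the fast variants of the algorithm go through.
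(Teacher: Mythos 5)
Your overall architecture (clique-sum decomposition into planar and bounded-treewidth parts, signatures over boundaries of size at most three, bottom-up combination, FKT as a black box) matches the paper's, but there is a genuine gap at the heart of the dynamic program: you never say how to compute the signature of an \emph{internal} planar part with many children. A planar part $G_t$ can have $\Theta(n)$ attachment cliques besides its navel. Once you glue even one child subtree $G_{\leq s}$ onto $G_t$, the intermediate graph is no longer planar (the subtree is an arbitrary $H$-minor-free graph), so FKT no longer applies to it; and the object you would need to carry forward is a signature over the navel \emph{plus all remaining attachment cliques}, which has exponentially many entries in the number of children, not $2^{3}$. Pairwise ``constant-size convolution at each 3-sum'' therefore does not close as a DP. The paper's missing ingredient is to replace each child's matchgate $(G_{\leq s},K)$ by an equivalent planar matchgate on at most $7$ vertices with $K$ on its outer face (Fact~\ref{fact:matchgate}; note that realizing an \emph{arbitrary} even or odd signature on three terminals by a small planar gadget is a genuine lemma of Valiant's, not just a finite check for $K_5$ and $V_8$), and to embed that gadget into the face of $G_t$ bounded by $K$. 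This keeps $G_t$ planar with $\mathcal{O}(n_t)$ vertices, so eight FKT calls yield its navel signature. For this to work the decomposition must additionally guarantee that every non-navel attachment clique \emph{bounds a face} of its planar part --- a property you never require and which the paper has to engineer on top of the raw decomposition of Demaine et al.\ in the proof of Theorem~\ref{thm:decomposition}.

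The same omission breaks your $\mathcal{O}(n^{1.5})$ shortcut: substituting planar gadgets for the constant-size non-planar pieces does \emph{not} ``reduce the whole instance to a single planar graph,'' because a 3-clique-sum of planar graphs need not be planar (a triangle need not bound a face on either side) --- this is precisely why $K_5$-minor-free graphs form a strictly larger class than planar graphs. The paper never forms the glued graph; it runs one FKT call per planar node and bounds the total by $\sum_t n_t^{1.5}\leq(\sum_t n_t)^{1.5}=\mathcal{O}(n^{1.5})$, with the improvement for $K_5$ and $K_{3,3}$ coming only from a faster decomposition routine. Finally, the double-counting of shared clique edges that you flag as the main obstacle is handled in the paper simply by allowing zero-weight parallel copies of the clique edges inside the parts, so no refined signature is needed; the real obstacle is the one above.
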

Note that the excluded minor $H$, rather than $G$, is required to
be single-crossing: Algorithms for single-crossing $G$ would follow
from a very simple reduction to the planar case. 

Theorem~\ref{thm:main} directly generalizes the algorithm for graphs
excluding $K_{3,3}$ or $K_{5}$, but is orthogonal to the result
for bounded-genus graphs: The graph consisting of $n$ disjoint copies
of the single-crossing graph $K_{5}$ has genus $\Theta(n)$, but
excludes $K_{3,3}$ as a minor. Thus, Theorem~\ref{thm:main} applies
on this graph, while the algorithm for bounded-genus graphs does not.
Conversely, the class of torus-embeddable graphs includes all single-crossing
graphs. Thus, the algorithm for bounded-genus graphs applies here,
while Theorem~\ref{thm:main} does not.

Graphs excluding a single-crossing minor $H$ have already been studied:
By a decomposition theorem \cite{DBLP:conf/gst/RobertsonS91}, which
constitutes a fragment of the general graph structure theorem for
general $H$-minor free graphs \cite{Robertson200343}, such graphs
can be decomposed into planar graphs and graphs of bounded treewidth,
and it was shown in \cite{DBLP:journals/jcss/DemaineHNRT04} how to
compute such decompositions. Furthermore, approximation algorithms
for the treewidth and other invariants of such graphs are known \cite{DBLP:journals/jcss/DemaineHNRT04,DBLP:conf/approx/DemaineHT02},
as well as $\mathcal{O}(n\log n)$ algorithms for computing maximum
flows \cite{DBLP:journals/jgaa/ChambersE13}.

Our algorithm requires black-boxes for $\PerfMatch$ on planar graphs
and for finding the decompositions described above. We also use the
concept of matchgates from \cite{DBLP:journals/siamcomp/Valiant08},
but can limit ourselves to a self-contained fragment of their theory.
All required ingredients are introduced in Section~\ref{sec:Preliminaries}
and used in Section~\ref{sub:Algorithm} to present the algorithm
proving Theorem~\ref{thm:main}.

\section{\label{sec:Preliminaries}Mise en place}

Let $\mathbb{F}$ be a field supporting efficient arithmetic operations.
Graphs $G=(V,E)$ are undirected and may feature parallel edges and
weights $w:E\to\mathbb{F}$. We allow zero-weight edges $e\in E$
with $w(e)=0$ and write $|G|:=|V(G)|$.

A graph $G$ is planar if it admits an embedding $\pi$ into the plane
without crossings, and single-crossing if it admits an embedding into
the plane with at most one crossing. Examples for single-crossing
graphs are $K_{5}$ and $K_{3,3}$. A plane graph is a pair $(G,\pi)$,
where $\pi$ is a planar embedding of $G$. Given a plane graph $(G,\pi)$
and a cycle $C$ in $G$, we say that $C$ bounds a face in $G$ if
one of the regions bounded by $C$ in $\pi$ is empty.

We write $\PM[G]$ for the set of perfect matchings of $G$ and define
$w(M)=\prod_{e\in M}w_{G}(e)$ and $\PerfMatch(G)=\sum_{M\in\PM[G]}w(M)$.
As already noted, despite its $\sharpP$-hardness on general graphs,
the value $\PerfMatch(G)$ can be computed in polynomial time for
planar $G$. 
\begin{thm}
\label{thm:planar}For planar graphs $G$, the value $\PerfMatch(G)$
can be computed in time $\mathcal{O}(n^{1.5})$.\end{thm}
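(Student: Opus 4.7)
The plan is to invoke the classical Fisher--Kasteleyn--Temperley method of Pfaffian orientations, combined with nested dissection for the underlying linear-algebra step. First, I would fix a combinatorial planar embedding $\pi$ of $G$, computable in linear time. The goal is to orient the edges of $G$ so that the resulting skew-symmetric matrix $A$, with $A_{uv}=w(uv)$ when $uv$ is oriented from $u$ to $v$, $A_{uv}=-w(uv)$ in the reverse case, and $0$ otherwise, satisfies $\mathrm{Pf}(A)=\pm\PerfMatch(G)$. Kasteleyn's theorem supplies the key combinatorial criterion: if every bounded face of $(G,\pi)$ has an odd number of edges directed clockwise around it, then every perfect matching contributes with the same sign to the Pfaffian expansion of $A$, so cancellations do not occur and $|\mathrm{Pf}(A)|=\PerfMatch(G)$.

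Second, I would produce such a Pfaffian orientation explicitly. A standard procedure chooses a spanning tree $T$ of $G$, orients its edges arbitrarily, and then processes the co-tree edges in an order induced by a traversal of the dual of the cotree: each co-tree edge closes a unique bounded face whose clockwise parity can be corrected on the fly by choosing the new edge's direction. An induction over the bounded faces confirms that the resulting orientation satisfies Kasteleyn's condition, and the whole step runs in $\bigoh(n)$ time.

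Third, I would compute $\mathrm{Pf}(A)$, and hence $\PerfMatch(G)$, from $A$. Using the identity $\mathrm{Pf}(A)^{2}=\det(A)$, it suffices to evaluate $\det(A)$ and extract the correct square root with a sign fixed by the orientation. Since $A$ has the sparsity pattern of a planar graph, the Lipton--Rose--Tarjan nested dissection algorithm applies: recursively using the planar separator theorem to obtain balanced separators of size $\bigoh(\sqrt{n})$, one produces an elimination order under which Gaussian elimination on $A$ terminates in $\bigoh(n^{1.5})$ arithmetic operations, matching the claimed bound.

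The main obstacle I expect is this last step: obtaining $\bigoh(n^{1.5})$ rather than the naive $\bigoh(n^{3})$ requires a careful fill-in analysis along the separator tree, establishing that eliminating a separator of size $s$ costs $\bigoh(s^{3})$ and that these costs telescope properly across the recursion. The combinatorial ingredients --- planar embedding, Kasteleyn orientation, and the Pfaffian identity --- are comparatively routine once Kasteleyn's lemma is granted. Note that this is precisely the black-box algorithm that Theorem~\ref{thm:main} will invoke on the planar pieces of its input graph.
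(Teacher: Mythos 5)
Your proposal is correct and follows essentially the same route as the paper's sketch: a Kasteleyn (sign-flipping/Pfaffian) orientation computed in linear time, followed by evaluating the determinant of the resulting planar-structured matrix via Lipton--Rose--Tarjan nested dissection in $\mathcal{O}(n^{1.5})$. The extra detail you give on constructing the orientation and on the fill-in analysis is consistent with the references the paper cites for these steps.
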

\begin{proof}
(Sketch of \cite{Kasteleyn19611209}) In time $\mathcal{O}(n)$, we
can compute a set $S\subseteq E(G)$ such that the following holds:
After flipping the sign of $w(e)$ for each edge $e\in S$, we obtain
a new planar graph with adjacency matrix $A'$ satisfying $\PerfMatch(G)=\sqrt{\det(A')}$.
If $A'$ is the adjacency matrix of a planar graph, then $\det(A')$
can be computed in time $\mathcal{O}(n^{1.5})$ by \cite{Dissection},
noted also in \cite{DBLP:journals/siamcomp/Valiant08}.
\end{proof}

\subsection{Graph minors and decompositions}

A graph $H$ is a minor of $G=(V,E)$ if $H$ can be obtained from
$G$ by repeated edge/vertex-deletions and edge-contractions. The
contraction of $uv\in E$ identifies vertices $u,v\in V(G)$ to a
new vertex $w$ and replaces possible edges $uz\in E$ or $vz\in E$
for $z\in V(G)$ by a new edge $wz$. For a graph class $\mathcal{H}$,
write $\mathcal{C}[\mathcal{H}]$ for the class of all graphs $G$
such that no $H\in\mathcal{H}$ is a minor of $G$. By Kuratowski's
theorem, $\mathcal{C}[K_{3,3},K_{5}]$ coincides with the planar graphs. 

Other graph classes can also be expressed by forbidden minors. In
fact, Robertson and Seymour's graph structure theorem \cite{Robertson200343}
describes the structure of graphs in $\mathcal{C}[H]$ for arbitrary
$H$. We use a fragment of this theorem that applies only when $H$
is single-crossing: Roughly speaking, graphs in $\mathcal{C}[H]$
consist of planar graphs and constant-size graphs that are glued together
in a well-specified way. Our algorithm will crucially rely on these
decompositions.
\begin{defn}
\label{def:decomp}
\begin{figure}[t]
\begin{centering}
\includegraphics[width=0.87\textwidth]{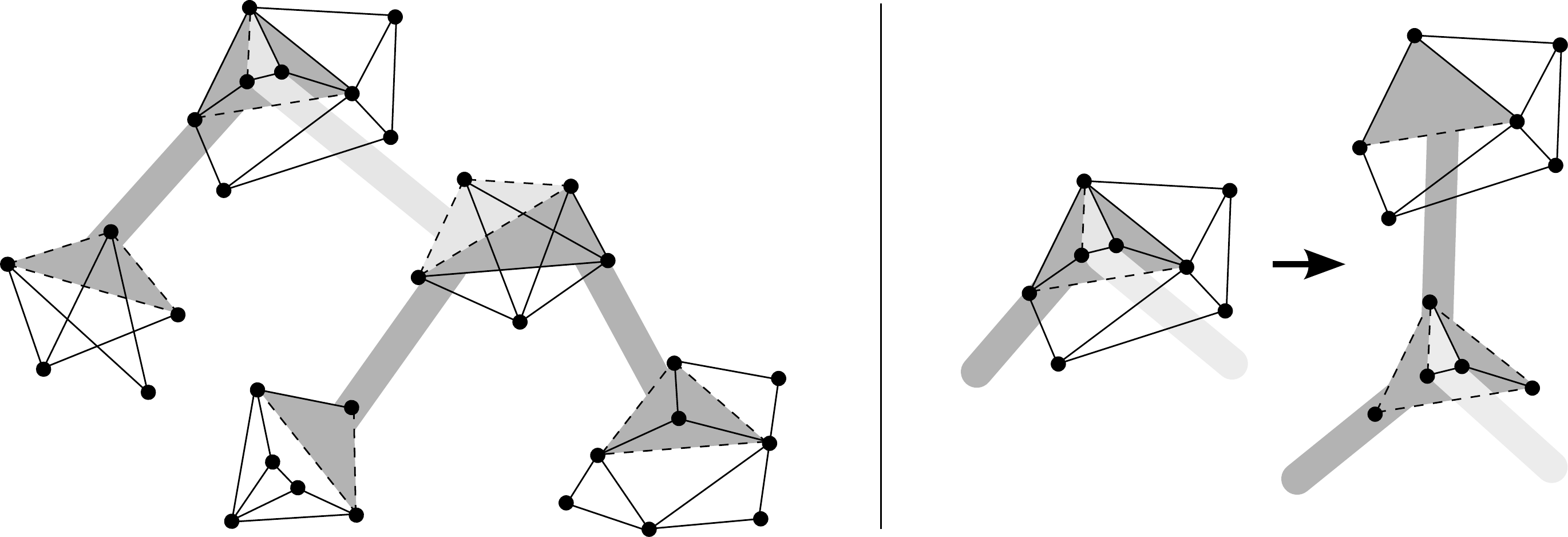}
\par\end{centering}

\caption{\label{fig:decomp}(left) $\mathcal{T}$ is almost $5$-nice: Either
$|V(G{}_{t})|\leq5$ or $G_{t}$ is a plane graph whose non-navel
attachment cliques bound faces, with the exception of one triangle
$K$ at the root. Zero-weight edges are drawn with dashed lines. (right)
The offending attachment clique $K$ is repaired.}
\end{figure}
Let $F,F'$ be graphs, both containing a vertex set $K$. Write $F\oplus_{K}F'$
for the graph obtained from the disjoint union of $F$ and $F'$ by
identifying, for each $v\in K$, the two copies of $v$. This may
create parallel edges between vertices in $K$. \end{defn}
\begin{itemize}
\item In the following, let $G$ be a graph.\textbf{ }A \emph{decomposition}
$\mathcal{T}=(T,\mathcal{G})$ of $G$ is a rooted tree $T$ with
a family of graphs $\mathcal{G}=\{G_{t}\}_{t\in V(T)}$ such that
the following holds:

\begin{enumerate}
\item For $st\in E(T)$, the set $K[s,t]:=V(G_{s})\cap V(G_{t})$ is a clique,
the so-called \emph{attachment clique} at $st$, possibly containing
zero-weight edges in $G_{s}$ or $G_{t}$. If $s$ is the parent of
$t$, we call $K[s,t]$ the \emph{navel} of $t$.
\item For $t\in V(T)$, define $G_{\leq t}$: If $t$ is a leaf, then $G_{\leq t}=G_{t}$.
If $t$ has children $s_{1},\ldots,s_{r}$ with navels $K_{1},\ldots,K_{r}$,
then $G_{\leq t}=G_{t}\oplus_{K_{1}}G_{\leq s_{1}}\oplus_{K_{2}}\ldots\oplus_{K_{r}}G_{\leq s_{r}}$.
If $t$ is the root, we require that $G_{\leq t}$ is isomorphic to
$G$ after removal of all zero-weight edges.
\end{enumerate}
\item For $c\in\mathbb{N}$, the decomposition $\mathcal{T}$ is \emph{$c$-nice}
if $G_{t}$ is given as a plane graph whenever $|V(G_{t})|>c$. Furthermore,
if $K$ is an attachment clique in $G_{t}$, then $|K|\leq3$. If
$|K|=3$ and $K$ is not the navel of $G_{t}$, then $K$ is required
to bound a face in $G_{t}$.
\item If $|V(G_{t})|\leq k$ for all $t\in V(T)$, then $\mathcal{T}$ is
a \emph{tree-decomposition} of width $k$ of $G$. The \emph{treewidth}
of $G$ is defined as $\min\{k\in\mathbb{N}\mid G\mbox{ has a tree-decomposition of width }k+1\}$. \end{itemize}
\begin{rem}
\label{rem:clique-decomp}The above definition of treewidth, used
e.g.~in \cite{Kriz1990177}, is equivalent to the more common one
that uses ``bags''. It is also verified that, if $\mathcal{T}$
is a decomposition of $G$ and $K$ is a clique in $G$, then there
is some node $t$ in $\mathcal{T}$ such that $K\subseteq V(G_{t})$.\end{rem}
\begin{thm}
\label{thm:decomposition}For every single-crossing graph $H$, there
is a constant $c\in\mathbb{N}$ such that the following holds: For
every $G\in\mathcal{C}[H]$, a $c$-nice decomposition $\mathcal{T}=(T,\mathcal{G})$
of $G$ can be found in time $\mathcal{O}(n^{4})$. Additionally,
$\mathcal{T}$ satisfies the size bounds $\sum_{t\in V(T)}|G_{t}|\in\mathcal{O}(n)$
and $|T|\in\mathcal{O}(n)$.\end{thm}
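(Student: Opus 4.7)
The plan is to obtain the $c$-nice decomposition by post-processing the raw output of the Demaine et al.\ algorithm \cite{DBLP:journals/jcss/DemaineHNRT04}, which, for any single-crossing $H$, produces in time $\mathcal{O}(n^{4})$ a $3$-clique-sum decomposition of any $G \in \mathcal{C}[H]$ whose pieces are either planar or of constant size $c_{0}(H)$, with attachment cliques already of size at most $3$. I would root this raw decomposition at an arbitrary node. All conditions of $c_{0}$-niceness are then immediately satisfied except possibly the requirement that non-navel attachment triangles bound faces in the planar pieces.

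To fix this, I would repair each offending triangle $K$ in a planar piece $G_{t}$ by splitting $G_{t}$ along $K$. If $K$ does not bound a face, the planar embedding of $G_{t}$ partitions $V(G_{t}) \setminus K$ into two nonempty sides $A$ and $B$, with no edges between them (any such edge would have to cross the triangle $K$). I form $G_{t}^{A} := G_{t}[A \cup K]$ and $G_{t}^{B} := G_{t}[B \cup K]$, each inheriting the embedding from $G_{t}$; by construction, $K$ bounds a face in both subgraphs. In the tree, I replace the node $t$ by two new nodes $t_{A}, t_{B}$ joined by an edge with attachment clique $K$, orienting the edge so that the original navel of $t$---a clique of $G_{t}$ and therefore, since $K$ separates $G_{t}$, contained in exactly one side---stays with the parent. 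Each original child of $t$ is then reattached to whichever of $t_{A}, t_{B}$ contains its navel, using the same clique-versus-separator principle.

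The main point I expect to have to defend is that iterating the repair converges without blowing up the size bounds. The key invariant is that any triangle already bounding a face in $G_{t}$ continues to bound a face after restriction to $G_{t}^{A}$ or $G_{t}^{B}$, since an empty side of such a triangle stays empty when passing to a subgraph. Together with the fact that the newly introduced attachment triangle $K$ bounds a face in both new pieces, this forces the global count of offending (non-navel, non-face-bounding) attachment triangles to strictly decrease with each repair. Since this count is initially at most $|E(T)| = \mathcal{O}(n)$, only $\mathcal{O}(n)$ repairs occur, each of which contributes at most three vertices to $\sum_{t} |V(G_{t})|$ (from duplicating $K$) and one node to $T$. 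The size bounds therefore remain $\mathcal{O}(n)$, and each repair can be implemented in time linear in $|V(G_{t})|$ by inspecting the embedding, so the total runtime is dominated by the initial $\mathcal{O}(n^{4})$ invocation of the Demaine et al.\ algorithm.
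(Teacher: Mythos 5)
The planar-repair half of your argument is essentially the paper's step (ii): splitting a planar piece along an offending attachment triangle, reattaching children by the clique-versus-separator principle, and charging each repair $\mathcal{O}(1)$ vertices and one tree node. Two remarks on that half. First, when you form $G_{t}[A\cup K]$ and $G_{t}[B\cup K]$ you keep the three weighted edges of $K$ in \emph{both} pieces, so the clique-sum $G_{t}^{A}\oplus_{K}G_{t}^{B}$ has those edges doubled and the reassembled graph is no longer isomorphic to $G$ after deleting zero-weight edges; the paper avoids this by giving the copy of $K$ in the new child zero-weight edges, and you need the same device. Second, the paper cuts off one offending region at a time into a new child rather than bisecting, but your variant and termination argument (the count of non-face-bounding, non-navel attachment triangles strictly decreases, and face-bounding triangles stay face-bounding under restriction) are sound modulo the weight issue.

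The genuine gap is in your description of the black-box. The decomposition of Demaine et al.\ produces pieces that are either planar or of \emph{bounded treewidth} $c_{0}(H)$ --- not of constant \emph{size}. A bounded-treewidth piece can be non-planar and have $\Theta(n)$ vertices, so after your post-processing the decomposition is not $c$-nice for any constant $c$: the definition requires every piece with more than $c$ vertices to be given as a plane graph. The paper therefore has a separate refinement step for these pieces: it runs Bodlaender's linear-time algorithm to get a width-$c$ tree-decomposition of each such $G_{t}$, replaces $t$ by the resulting tree of constant-size bags (rooted at a bag containing the navel of $t$, which exists because every clique lies in some bag), and reattaches each former child of $t$ to a bag containing its navel. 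This step, and the accompanying bookkeeping showing it adds only $\mathcal{O}(|G_{t}|)$ nodes and vertices, is entirely missing from your proposal and cannot be skipped.
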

\begin{proof}
Using the decomposition algorithm presented in \cite{DBLP:journals/jcss/DemaineHNRT04},
we compute in $\mathcal{O}(n^{4})$ time a decomposition $\mathcal{T}'=(T',\mathcal{G}')$
that satisfies the following: For each $t\in V(T')$, either $G_{t}$
has treewidth $\leq c$, or $G_{t}$ is a plane graph whose attachment
cliques $K$ satisfy $|K|\leq3$. Furthermore, $\mathcal{T}'$ satisfies
the size bounds stated in the theorem for $\mathcal{T}$.

By local patches at nodes $t\in V(T)$, we successively transform
$\mathcal{T}'$ to a $c$-nice decomposition $\mathcal{T}$. This
involves (i) splitting nodes $t$ of treewidth $\leq c$ into trees
of constant-size parts, and (ii) splitting planar nodes into multiple
planar nodes whose non-navel attachments bound faces. 

With $Z_{t}$ denoting the set of nodes added to $\mathcal{T}'$ by
patching $t$, we show along the way that the local size bound $\sum_{z\in Z_{t}}|G_{z}|\in\mathcal{O}(|G_{t}|)$
holds. This implies the claimed size bounds on $\mathcal{T}$.

\textbf{(i)} Let $G_{t}$ have treewidth $\leq c$. Using \cite{Bodlaender:1996:LAF:243705.243727},
compute in time $\mathcal{O}(2^{c^{3}}n)$ a tree-decomposition $\mathcal{R}=(R,\mathcal{B})$
of width $c$ of $G_{t}$ with $\mathcal{B}=\{B_{r}\}_{r\in V(R)}$
and $|R|\in\mathcal{O}(|G_{t}|)$. Let $K$ be the navel of $t$ and
let $r$ be an arbitrary node of $R$ satisfying $K\subseteq V(B_{r})$,
which exists by Remark~\ref{rem:clique-decomp}. Declare $r$ as
root of $\mathcal{R}$ and attach $\mathcal{R}$ to $\mathcal{T}'$
by deleting $t$ from $\mathcal{T}'$, disconnecting possible children
of $t$, and inserting $\mathcal{R}$ with root $r$ at the place
of $t$. For every child $s$ of $t$ in $\mathcal{T}'$ that was
disconnected this way, do the following: By Remark~\ref{rem:clique-decomp},
its navel, which is a clique, is contained in $B_{p}$ for some node
$p$ of $\mathcal{R}$. Add the edge $ps$ to $\mathcal{T}'$. Processing
$t$ this way adds $|R|\in\mathcal{O}(|G_{t}|)$ new nodes $z$ to
$\mathcal{T}'$, each with $|G_{z}|\leq c$, showing the local size
bound for $t$.

\textbf{(ii)} Similar to \cite{DBLP:journals/jgaa/ChambersE13}. Let
$K$ be an attachment clique of $G_{t}$ that does not bound a face,
as in Figure~\ref{fig:decomp}. Then $t$ has a neighbor $s$ such
that the subgraph $F$ bounded by $K=K[s,t]$ in the embedding of
$G_{t}$ contains other vertices than $K$. Delete $F-K$ from $G_{t}$.
Add a new node $t'$ adjacent to $t$ and define $G_{t'}:=F$ with
zero weight at all edges in $F[K]$. For each child $r$ of $t$ whose
navel is contained in $V(F)$, replace the edge $rt$ of $T$ by $rt'$.
If the newly created graph $G_{t'}$ contains another attachment clique
that does not bound a face, recurse on $G_{t'}$. 

For (ii), we see that $|Z_{t}|\leq|G_{t}|$ since every recursion
step deletes at least one vertex from its current subgraph of $G_{t}$.
Secondly, the local size bound holds at $t$ since every recursion
step introduces at most $3$ new vertices, namely the copy of $K$
in the child node.\end{proof}
\begin{rem}
\label{rem:runtime}For $H\in\{K_{3,3},K_{5}\}$, an $\mathcal{O}(1)$-nice
decomposition $\mathcal{T}$ can be found in time $\mathcal{O}(n)$:
Instead of computing $\mathcal{T}'$ by \cite{DBLP:journals/jcss/DemaineHNRT04}
in the first step, use \cite{Asano1985249} for $H=K_{3,3}$ or \cite{LATIN_K5}
for $H=K_{5}$.
\end{rem}

\subsection{Matchgates and signatures}

In the following, we present the concept of matchgates from \cite{DBLP:journals/siamcomp/Valiant08},
as these will play a central role in our algorithm. We limit ourselves
to a small self-contained fragment of their theory.
\begin{defn}
[\cite{DBLP:journals/siamcomp/Valiant08}] A \emph{matchgate} $\Gamma=(G,S)$
is a graph $G$ with a set of external vertices $S\subseteq V(G)$.
Its \emph{signature} $\Sig(\Gamma):2^{S}\to\mathbb{F}$ is the function
that maps $X\subseteq S$ to $\PerfMatch(G-X)$.\end{defn}
\begin{rem}
\label{rem:compute-matchgate}For $\Gamma=(G,S)$ with $|S|=k$, we
represent $\Sig(\Gamma)$ by a vector in $\mathbb{F}^{2^{k}}$. If
we can compute $\PerfMatch(G-X)$ for $X\subseteq S$ in time $t$,
then we can compute $\Sig(\Gamma)$ in time $\bigoh(2^{k}t)$.
\end{rem}
The signature of $\Gamma$ describes its behavior in sums with other
graphs:
\begin{lem}
\label{lem:sig-prod}For matchgates $\Gamma=(G,S)$ and $\Gamma'=(G',S)$,
let $G^{*}=G\oplus_{S}G'$. Then 
\begin{equation}
\PerfMatch(G^{*})=\sum_{Y\subseteq S}\Sig(\Gamma,Y)\cdot\Sig(\Gamma',S\setminus Y).\label{eq:join}
\end{equation}
\end{lem}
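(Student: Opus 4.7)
The plan is to expand $\PerfMatch(G^*)$ as a sum over its perfect matchings and group the terms according to which external vertices in $S$ are covered from the $G$-side versus the $G'$-side. The central observation is that in the gluing $G^* = G \oplus_S G'$ only the vertices of $S$ are identified, while edges are preserved as a disjoint multiset; hence $E(G^*) = E(G) \sqcup E(G')$. Consequently every perfect matching $M^*$ of $G^*$ splits uniquely as $M^* = M \sqcup M'$ with $M := M^* \cap E(G)$ and $M' := M^* \cap E(G')$, and the weight factors as $w(M^*) = w(M)\cdot w(M')$.

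To turn this into a useful sum, I would introduce the index $Y := \{\, s \in S \mid s \text{ is incident to an edge of } M'\,\}$. Because $M^*$ is perfect, each external vertex is matched by exactly one of $M$ or $M'$; the $s \in Y$ are matched via $M'$ and the $s \in S \setminus Y$ are matched via $M$. Combined with the fact that every vertex of $V(G)\setminus S$ has no incident $G'$-edge (and symmetrically for $V(G')\setminus S$), this forces $M$ to be a perfect matching of $G - Y$ and $M'$ to be a perfect matching of $G' - (S \setminus Y)$. Conversely, for any $Y \subseteq S$, any pair $(M, M') \in \PM[G - Y] \times \PM[G' - (S\setminus Y)]$ assembles into a perfect matching $M \sqcup M'$ of $G^*$, and the two operations are mutually inverse. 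Hence
\begin{equation*}
\PM[G^*] \;=\; \bigsqcup_{Y \subseteq S}\; \PM[G - Y] \,\times\, \PM[G' - (S \setminus Y)],
\end{equation*}
compatibly with the product weight.

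Summing $w(M) \cdot w(M')$ over this disjoint union and recalling the definitions $\Sig(\Gamma, Y) = \PerfMatch(G - Y)$ and $\Sig(\Gamma', S\setminus Y) = \PerfMatch(G' - (S\setminus Y))$ yields exactly \eqref{eq:join}. There is no substantial obstacle; the only subtlety is that $\oplus_S$ may create parallel edges between vertices of $S$, but this causes no double counting because $E(G^*)$ is treated as a disjoint union of $E(G)$ and $E(G')$ throughout. The rest is bookkeeping: one must consistently remember that a vertex of $S$ covered from one side corresponds to its \emph{removal} on the other side, which is precisely how the indices $Y$ and $S \setminus Y$ appear on opposite factors in the final formula.
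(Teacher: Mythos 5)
Your proof is correct and follows essentially the same route as the paper: partition each perfect matching of $G^{*}$ into its $G$-part and $G'$-part, index by the set $Y$ of external vertices matched on the $G'$-side, and observe the bijection with $\PM[G-Y]\times\PM[G'-(S\setminus Y)]$ together with multiplicativity of weights. Your explicit remark about parallel edges being kept as a disjoint multiset is a small clarification the paper leaves implicit, but the argument is the same.
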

\begin{proof}
Each $M\in\PM[G^{*}]$ induces a unique partition into $M=N\cup N'$
with $N\subseteq E(G)$ and $N'\subseteq E(G')$. Since $M$ is a
perfect matching, every $v\in V(G^{*})$ is matched in exactly one
of $N$ or $N'$. For vertices $v\notin S$, the choice of $N$ or
$N'$ independent of $M$.

For $Y\subseteq S$, let $\mathcal{M}_{Y}\subseteq\PM[G^{*}]$ denote
the perfect matchings of $G^{*}$ with $S\setminus Y$ matched by
$N$ and $Y$ matched by $N'$. Since $\{\mathcal{M}_{Y}\}_{Y\subseteq S}$
partitions $\PM[G^{*}]$, we have $\PerfMatch(G^{*})=\sum_{Y\subseteq S}\sum_{M\in\mathcal{M}_{Y}}w(M)$.
It remains to show $\sum_{M\in\mathcal{M}_{Y}}w(M)=\Sig(\Gamma,Y)\cdot\Sig(\Gamma',S\setminus Y)$:
This follows since every $M\in\mathcal{M}_{Y}$ can be written as
$M=N\cup N'$ with $(N,N')\in\PM[G-Y]\times\PM[G'-(S\setminus Y)]$
and the correspondence between $M$ and $(N,N')$ is bijective.
\end{proof}
Since the only information used about $G'$ in (\ref{eq:join}) is
contained in $\Sig(\Gamma')$, we conclude:
\begin{cor}
\label{cor:sig-equiv}Let $\Gamma=(F,S)$ and $\Gamma'=(F',S)$ and
let $G$ be a graph with $S\subseteq V(G)$. If $\Sig(\Gamma)=\Sig(\Gamma')$,
then $\PerfMatch(G\oplus_{S}\Gamma)=\PerfMatch(G\oplus_{S}\Gamma')$.
\end{cor}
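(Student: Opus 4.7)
The plan is to reduce Corollary~\ref{cor:sig-equiv} to a direct application of Lemma~\ref{lem:sig-prod}, the essential observation being that the graph $G$ can itself be viewed as a matchgate with external vertex set $S$. Concretely, I would set $\Gamma_G = (G, S)$, so that the two graphs whose $\PerfMatch$ values we want to compare are exactly $G \oplus_S \Gamma$ and $G \oplus_S \Gamma'$, each being a ``$\oplus_S$''-join of two matchgates with common external set $S$.

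Applying Lemma~\ref{lem:sig-prod} to the pair $(\Gamma_G, \Gamma)$ yields
\[
\PerfMatch(G \oplus_S \Gamma) \;=\; \sum_{Y \subseteq S} \Sig(\Gamma_G, Y)\cdot \Sig(\Gamma, S \setminus Y),
\]
and applying it to the pair $(\Gamma_G, \Gamma')$ gives the analogous identity with $\Gamma'$ in place of $\Gamma$. Since the hypothesis $\Sig(\Gamma) = \Sig(\Gamma')$ means precisely that $\Sig(\Gamma, S \setminus Y) = \Sig(\Gamma', S \setminus Y)$ for every $Y \subseteq S$, the two sums agree term by term, and the desired equality $\PerfMatch(G \oplus_S \Gamma) = \PerfMatch(G \oplus_S \Gamma')$ follows.

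There is no real obstacle here; the only point worth double-checking is that Lemma~\ref{lem:sig-prod} applies without any extra structural requirement on the pair of matchgates beyond sharing the external vertex set $S$ (in particular, $S$ need not be a clique in $G$ for the argument above to work), which is indeed the content of the lemma as stated.
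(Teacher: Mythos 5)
Your argument is correct and is exactly the paper's own reasoning: the corollary is stated immediately after the sentence ``Since the only information used about $G'$ in (\ref{eq:join}) is contained in $\Sig(\Gamma')$, we conclude,'' which is precisely your observation that both joins expand via Lemma~\ref{lem:sig-prod} into sums depending on $\Gamma$ (resp.~$\Gamma'$) only through their signatures. Your added remark that no clique condition on $S$ is needed is also consistent with the lemma as stated.
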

Whenever $\Gamma$ has $\leq3$ external vertices, we can find a small
planar matchgate $\Gamma'$ with the same signature. We show this
in the next fact, essentially from \cite{DBLP:journals/siamcomp/Valiant08}.
Together with Corollary~\ref{cor:sig-equiv}, we will use $\Gamma'$
to mimick $\Gamma$, similarly to an idea in \cite{DBLP:journals/jgaa/ChambersE13}
for mimicking flow networks.
\begin{fact}
\label{fact:matchgate} For every matchgate $\Gamma=(G,S)$ with $|S|\leq3$,
there is a matchgate $\Gamma'=(F,S)$ with $\Sig(\Gamma)=\Sig(\Gamma')$
such that $F$ is a plane graph on $\leq7$ vertices with $S$ on
its outer face.
\begin{figure}[t]
\begin{centering}
\includegraphics[width=1\textwidth]{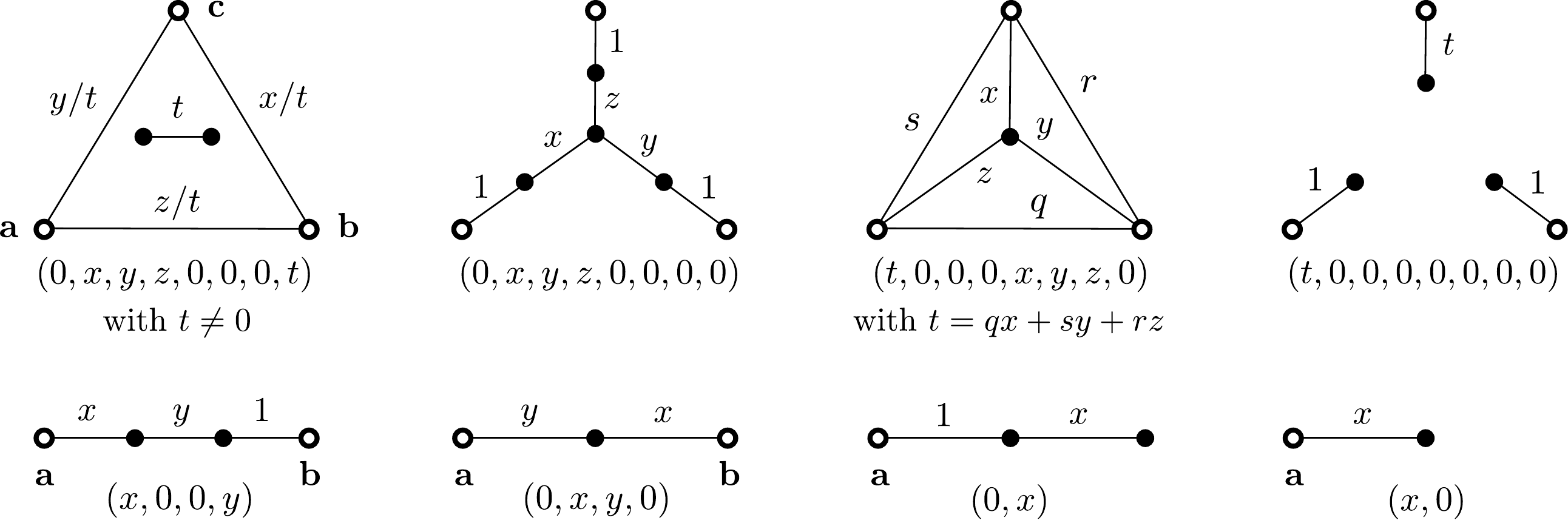}
\par\end{centering}

\caption{\label{fig:matchgates}The matchgates from Propositions 6.1 and 6.2
in \cite{DBLP:journals/siamcomp/Valiant08}, each drawn as a plane
graph with a set $S\subseteq\{\mathbf{a},\mathbf{b},\mathbf{c}\}$
as external vertices on the outer face. Below each matchgate, its
signature is given as a vector of length $2^{|S|}$ with entries ordered
as $\emptyset,a,b,c,ab,ac,bc,abc$ or a subsequence thereof. If $f$
is even or odd, then at least one matchgate $\Gamma$ satisfies $\Sig(\Gamma)=f$:
If $|S|=3$ and $f$ is even, then either the first or second matchgate
applies. If $|S|=3$ and $f$ is odd, the third or fourth matchgate
applies. If $|S|\leq2$, a matchgate of the second row applies.}
\end{figure}
\end{fact}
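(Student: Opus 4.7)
The plan is to reduce Fact~\ref{fact:matchgate} to the concrete matchgates depicted in Figure~\ref{fig:matchgates}, using a parity argument to cut down the space of possible signatures to something that small templates can cover.

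\textbf{Parity of signatures.} First I would observe that for any matchgate $\Gamma=(G,S)$ the value $\Sig(\Gamma,X) = \PerfMatch(G-X)$ vanishes whenever $|V(G)| - |X|$ is odd, simply because a graph on an odd number of vertices has no perfect matching. Hence $\Sig(\Gamma)$ is supported entirely on subsets $X\subseteq S$ of one fixed parity, determined by $|V(G)|\pmod{2}$; following Figure~\ref{fig:matchgates} I will call the signature \emph{even} or \emph{odd} accordingly. In particular, for $|S|=3$ at most four of the eight entries of $\Sig(\Gamma)$ can be nonzero (indexed either by $\emptyset,ab,ac,bc$ or by $a,b,c,abc$), and analogously for $|S|\leq 2$ at most two entries are nonzero.

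\textbf{Case $|S|=3$.} Here the target signature is specified by four free values in $\mathbb{F}$. I would show that each parity class is realized by two template gadgets from Figure~\ref{fig:matchgates}: in the even case the first gadget handles the generic target, while the second covers the degenerate vectors on which the first gadget's parameterization becomes singular; the third and fourth gadgets play the same role in the odd case. To verify this I would expand $\PerfMatch$ of each planar skeleton as a small polynomial in its edge weights, obtaining a $4$-parameter family whose image jointly covers $\mathbb{F}^{4}$. This is exactly the computational content of Propositions~6.1 and~6.2 in \cite{DBLP:journals/siamcomp/Valiant08}, which I would invoke at this point rather than redo.

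\textbf{Case $|S|\leq 2$.} Now at most two entries of the signature are nonzero, and I would realize any such target directly using the plane gadgets drawn in the second row of Figure~\ref{fig:matchgates}. Each has only a handful of vertices, so the induced system of equations in the edge weights is tiny and can be inverted by inspection. Once all cases are handled, the bound $|V(F)|\leq 7$ and the requirement that $S$ lie on the outer face of $F$ are immediate from Figure~\ref{fig:matchgates}.

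\textbf{Main obstacle.} The core technical step, and the only one where real work happens, is checking that in the $|S|=3$ case the two template gadgets per parity \emph{jointly} surject onto $\mathbb{F}^{4}$, i.e.\ identifying the exact locus of target signatures on which the first gadget's parameterization degenerates and verifying that the second gadget covers precisely that locus. Everything else — the parity observation, the $|S|\leq 2$ case, and the vertex/planarity bounds — is either elementary or an inspection of a fixed finite picture.
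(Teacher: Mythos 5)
Your proposal is correct and follows essentially the same route as the paper: the same parity observation (that $\Sig(\Gamma)$ is supported on subsets $X$ of a single parity determined by $|V(G)|\bmod 2$), followed by delegating the realizability of every even or odd signature to the concrete gadgets of Figure~\ref{fig:matchgates}, i.e.\ to Propositions 6.1 and 6.2 of Valiant. The step you flag as the main obstacle --- verifying that the two gadgets per parity class jointly cover all of $\mathbb{F}^{4}$ --- is exactly the part the paper likewise outsources to those propositions rather than reproving.
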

\begin{proof}
We call $f:2^{S}\to\mathbb{F}$ \emph{even} if $f(X)=0$ for all $X$
of odd cardinality, and we call $f$ \emph{odd} if $f(X)=0$ for all
$X$ of even cardinality. Since every matching features an even number
of matched vertices, $\Sig(\Gamma)$ is even/odd if $|G|$ is even/odd.
Hence Figure~\ref{fig:matchgates}, adapted from \cite{DBLP:journals/siamcomp/Valiant08},
contains a matchgate with signature $\Sig(\Gamma)$ after suitable
substitution of edge weights.
\end{proof}

\section{\label{sub:Algorithm}Proof of Theorem 1}

By Theorem~\ref{thm:decomposition}, if $G$ excludes a fixed single-crossing
minor $H$, we can find a $c$-nice decomposition $\mathcal{T}=(T,\mathcal{G})$
with $c\in\mathcal{O}(1)$. This $\mathcal{T}$ satisfies $\sum_{t\in V(T)}|G_{t}|\in\mathcal{O}(n)$
and $|T|\in\mathcal{O}(n)$.

For $t\in V(T)$, let $n_{t}=|G_{t}|$. For non-root nodes $t\in V(T)$
with navel $K$, define the matchgate $\Gamma_{\leq t}=(G_{\leq t},K)$.
For the root $r\in V(T)$, note that $G_{\leq r}=G$. Since $r$ has
no navel, write $\Gamma_{\leq r}=(G,\emptyset)$ by convention.

We compute $\Sig(\Gamma_{\leq t})$ for each $t\in V(T)$ by a bottom-up
traversal of $\mathcal{T}$. This computes $\Sig(\Gamma_{\leq r},\emptyset)$
for the root $r$, which is equal to $\PerfMatch(G)$ by definition.
To process $t\in V(T)$, we assume that $\Sig(\Gamma_{\leq r})$ is
known for each child $r$ of $t$. This is trivially true if $t$
is a leaf and will be assumed by induction for non-leaf nodes. We
then compute $\Sig(\Gamma_{\leq t})$ as follows:
\begin{itemize}
\item If $G_{t}$ has $\leq c$ vertices, let $V=V(G_{t})$, let $\Delta_{0}=(G_{t},V)$
and compute $\Sig(\Delta_{0})$ in time $2^{\mathcal{O}(c^{2})}$
by brute force. Let $s_{1},\ldots,s_{b}$ be the children of $t$,
with navels $K_{1},\ldots,K_{b}\subseteq V$. For $1\leq i\leq b$,
define $\Delta_{i}=(G_{t}\oplus_{K_{1}}G_{\leq s_{1}}\oplus_{K_{2}}\ldots\oplus_{K_{i}}G_{\leq s_{i}},V)$
and successively compute $\Sig(\Delta_{i})$ from the values of $\Sig(\Delta_{i-1})$
and $\Sig(G_{\leq s_{i}})$ by means of Lemma~\ref{lem:sig-prod}
and Remark~\ref{rem:compute-matchgate}. After completing this, since
the external nodes $V$ of $\Delta_{b}$ trivially include the navel
of $t$, we obtain $\Sig(\Gamma_{\leq t})$ as a restriction of $\Sig(\Delta_{b})$.
\item If $G_{t}$ is planar, first perform the following for each attachment
clique $K$ of $G_{t}$: 

\begin{enumerate}
\item Let $s_{1},\ldots,s_{b}$ denote the children of $t$ with navel $K$
and define the matchgate $\Delta=(G_{\leq s_{1}}\oplus_{K}\ldots\oplus_{K}G_{\leq s_{b}},K)$.
Recall that $|K|\leq3$ since $\mathcal{T}$ is nice.
\item Use Lemma~\ref{lem:sig-prod} to compute $f=\Sig(\Delta)$ and use
Fact~\ref{fact:matchgate} to obtain a planar matchgate $\Phi$ on
external vertices $K$ with $\Sig(\Phi)=f$ and $K$ on its outer
face. 
\item Replace $G_{t}$ by $G_{t}\oplus_{K}\Phi$, resulting in a planar
graph: Planarity is obvious if $|K|\leq2$. If $|K|=3$, recall that
$K$ lies on the outer face of $\Phi$, and that $K$ bounds a face
in $G_{t}$. The union of such planar graphs preserves planarity.
\end{enumerate}

After processing all attachment cliques, the graph $G_{t}$ is planar
and has $\mathcal{O}(n_{t})$ vertices. By Corollary~\ref{cor:sig-equiv},
we have $\Sig(\Psi)=\Sig(\Gamma_{\leq t})$ for $\Psi=(G_{t},K)$,
where $K$ with $|K|\leq3$ is the navel of $t$. Compute $\Sig(\Psi)$
by Theorem~\ref{thm:planar} and Remark~\ref{rem:compute-matchgate}
in time $\mathcal{O}(n_{t}^{1.5})$.

\end{itemize}
By Theorem~\ref{thm:decomposition} and Remark~\ref{rem:runtime},
computing $\mathcal{T}$ requires $\mathcal{O}(n^{4})$ time for general
$H$ or $\mathcal{O}(n)$ time for $H\in\{K_{3,3},K_{5}\}$. Processing
$\mathcal{T}$ requires time $\mathcal{O}(|T|+\sum_{t\in T}n_{t}^{1.5})$:
At node $t$, we spend either $2^{\mathcal{O}(c^{2})}$ or $\mathcal{O}(n_{t}^{1.5})$
time. Since $\sum_{t\in T}n_{t}\in\mathcal{O}(n)$ by the size bound
of Theorem~\ref{thm:decomposition}, it follows that $\sum_{t\in T}n_{t}^{1.5}\leq(\sum_{t\in T}n_{t})^{1.5}\in\mathcal{O}(n^{1.5})$.
As $|T|\in\mathcal{O}(n)$, the overall runtime claims follow.

\section{Conclusions and future work}

We presented a polynomial-time algorithm for $\PerfMatch(G)$ on graphs
$G\in\mathcal{C}[H]$ when $H$ is single-crossing. Since structural
results about graphs in $\mathcal{C}[H]$ for arbitrary (and not necessarily
single-crossing) graphs $H$ are known \cite{Robertson200343}, it
is natural to ask whether our approach can be extended to such graphs.
We cautiously believe in an affirmative answer -- in fact, Mingji
Xia and the author made some progress towards a proof, but are still
facing nontrivial obstacles.

\begingroup \fontsize{10pt}{11pt}\selectfont

\bibliographystyle{plain}
\bibliography{1CR}

\begin{thebibliography}{10}

\bibitem{Asano1985249}
T.~Asano.
\newblock An approach to the subgraph homeomorphism problem.
\newblock {\em Theor. Comp. Sci.}, 38(0):249--267, 1985.

\bibitem{Bodlaender:1996:LAF:243705.243727}
H.~Bodlaender.
\newblock A linear-time algorithm for finding tree-decompositions of small
  treewidth.
\newblock {\em SIAM J. Comput.}, 25(6):1305--1317, December 1996.

\bibitem{B-2000-Completeness-and-Reduction-in-Algebraic-Complexity-Theory}
P.~B\"urgisser.
\newblock {\em Completeness and Reduction in Algebraic Complexity Theory}.
\newblock Number~7 in Algorithms and Computation in Mathematics. Springer
  Verlag, 2000.
\newblock 168 + xii pp.

\bibitem{DBLP:journals/jgaa/ChambersE13}
E.~Chambers and D.~Eppstein.
\newblock Flows in one-crossing-minor-free graphs.
\newblock {\em J. Graph Algorithms Appl.}, 17(3):201--220, 2013.

\bibitem{DBLP:journals/jcss/DemaineHNRT04}
E.~Demaine, M.~Hajiaghayi, N.~Nishimura, P.~Ragde, and D.~Thilikos.
\newblock Approximation algorithms for classes of graphs excluding
  single-crossing graphs as minors.
\newblock {\em J. Comput. Syst. Sci.}, 69(2):166--195, 2004.

\bibitem{DBLP:conf/approx/DemaineHT02}
E.~Demaine, M.~Hajiaghayi, and D.~Thilikos.
\newblock 1.5-approximation for treewidth of graphs excluding a graph with one
  crossing as a minor.
\newblock In {\em APPROX}, pages 67--80, 2002.

\bibitem{DBLP:journals/combinatorics/GalluccioL99}
A.~Galluccio and M.~Loebl.
\newblock On the theory of pfaffian orientations. {I.} {P}erfect matchings and
  permanents.
\newblock {\em Electr. J. Comb.}, 6, 1999.

\bibitem{DBLP:journals/jacm/JerrumSV04}
M.~Jerrum, A.~Sinclair, and E.~Vigoda.
\newblock A polynomial-time approximation algorithm for the permanent of a
  matrix with nonnegative entries.
\newblock {\em J. ACM}, 51(4):671--697, 2004.

\bibitem{Kasteleyn19611209}
P.~Kasteleyn.
\newblock Graph theory and crystal physics.
\newblock In {\em Graph Theory and Theoretical Physics}, pages 43--110.
  Academic Press, 1967.

\bibitem{Kriz1990177}
Igor Kriz and Robin Thomas.
\newblock Clique-sums, tree-decompositions and compactness.
\newblock {\em Discrete Mathematics}, 81(2):177 -- 185, 1990.

\bibitem{Dissection}
R.~Lipton, D.~Rose, and R.~Tarjan.
\newblock Generalized nested dissection.
\newblock {\em SIAM Journal on Numerical Analysis}, 16(2):346--358, 1979.

\bibitem{PM_Little}
C.~Little.
\newblock An extension of {K}asteleyn's method of enumerating the 1-factors of
  planar graphs.
\newblock In {\em Combinatorial Mathematics}, LNCS, pages 63--72. 1974.

\bibitem{LATIN_K5}
B.~Reed and Z.~Li.
\newblock Optimization and recognition for {K}5-minor free graphs in linear
  time.
\newblock In {\em LATIN 2008: Theoretical Informatics}, pages 206--215. 2008.

\bibitem{DBLP:conf/gst/RobertsonS91}
N.~Robertson and P.~Seymour.
\newblock Excluding a graph with one crossing.
\newblock In {\em Graph Structure Theory}, pages 669--676, 1991.

\bibitem{Robertson200343}
N.~Robertson and P.~Seymour.
\newblock Graph minors. {XVI. E}xcluding a non-planar graph.
\newblock {\em Journal of Combinatorial Theory, Series B}, 89(1):43 -- 76,
  2003.

\bibitem{ThieraufPM}
S.~Straub, T.~Thierauf, and F.~Wagner.
\newblock Counting the number of perfect matchings in {K}5-free graphs.
\newblock {\em Electronic Colloquium on Comp. Complexity (ECCC)}, 21(79), 2014.

\bibitem{FisherTemperley}
H.~Temperley and M.~Fisher.
\newblock Dimer problem in statistical mechanics - an exact result.
\newblock {\em Philosophical Magazine}, 6(68):1061--1063, 1961.

\bibitem{DBLP:journals/tcs/Valiant79}
L.~Valiant.
\newblock The complexity of computing the permanent.
\newblock {\em Theor. C. Sci.}, pages 189--201, 1979.

\bibitem{DBLP:journals/siamcomp/Valiant08}
L.~Valiant.
\newblock Holographic algorithms.
\newblock {\em SIAM J. Comput.}, 37(5):1565--1594, 2008.

\bibitem{DBLP:journals/iandc/Vazirani89}
V.~Vazirani.
\newblock {NC} algorithms for computing the number of perfect matchings in
  ${K}_{3,3}$-free graphs and related problems.
\newblock {\em Inf. Comput.}, 80(2):152--164, 1989.

\end{thebibliography}
\endgroup
\end{document}